\newcolumntype{Y}{>{\centering\arraybackslash}X}
\newcolumntype{C}[1]{>{\centering\let\newline\\\arraybackslash\hspace{0pt}}m{#1}}
\DeclareMathOperator*{\argmin}{\arg\!\min}
\DeclareMathOperator*{\supp}{supp}
\begin{document}
\mainmatter  
\title{Incompressible image registration \texorpdfstring{\\}{} using divergence-conforming B-splines}
\titlerunning{Incompressible registration using divergence-conforming B-splines}
%
\author{Lucas Fidon\inst{1}
    \and Michael Ebner\inst{1,2}
	\and Luis C. Garcia-Peraza-Herrera\inst{1,2}
	\and \\Marc Modat\inst{1}
	\and S\'ebastien Ourselin\inst{1}
	\and Tom Vercauteren\inst{1}}
\authorrunning{Lucas Fidon et al.}
%
\institute{School of Biomedical Engineering \& Imaging Sciences, King's College London, UK \and
University College London, UK}

\maketitle              
\begin{abstract}
Anatomically plausible image registration often requires volumetric preservation.
%
%
Previous approaches to incompressible image registration have exploited relaxed constraints, ad hoc optimisation methods or practically intractable computational schemes.
Divergence-free velocity fields have been used to achieve incompressibility in the continuous domain, although, after discretisation, no guarantees have been provided. 
%
%
%
In this paper, we introduce stationary velocity fields (SVFs) parameterised by divergence-conforming B-splines in the context of image registration.
We demonstrate that sparse linear constraints on the parameters of such divergence-conforming B-Splines SVFs lead to being exactly divergence-free at any point of the continuous spatial domain.
In contrast to previous approaches, our framework can easily take advantage of modern solvers for constrained optimisation, symmetric registration approaches, arbitrary image similarity and additional regularisation terms.
%
%
We study the numerical incompressibility error for the transformation in the case of an Euler integration, which gives theoretical insights on the improved accuracy error over previous methods.
We evaluate the proposed framework using synthetically deformed multimodal brain images, and the STACOM'11 myocardial tracking challenge.
Accuracy measurements demonstrate that our method compares favourably with state-of-the-art methods whilst achieving volume preservation.

\end{abstract}

\section{Introduction}
Medical image registration 
%
consists of finding a spatial transformation that aligns two or more images. 
The intrinsic ill-posedness of registration can lead to anatomically implausible transformations
associated with unrealistic volumetric distortion of the anatomy.
%
For certain anatomical regions, such as the myocardium, physiologically plausible image registration requires volumetric preservation which corresponds to so-called \emph{incompressible} registration~\cite{Bistoquet2008,Mansi2011,Tobon-Gomez2013}.


%
%
Incompressible medical image registration between a pair of images $I_1, I_2$ 
can be defined as a constrained optimisation problem:
\begin{equation}
    \label{eq:reg_incompressible_continuous}
        \argmin_{\Phi \in \mathcal{D}\left(\Omega\right)}\,\,  
        \mathcal{L}(I_1, I_2, \Phi) + R(\Phi) \quad
        \text{s.t.} \quad
        \left[
        \forall x \in \mathcal{M},\quad \det\left(J_{\Phi}(x)\right) = 1
        \right]
\end{equation}
where $\Omega$ is the image domain, $\mathcal{M} \subset \Omega$ is the incompressible region, $\mathcal{L}$ an image dissimilarity measure, $R$ a regularisation term, $\mathcal{D}\left(\Omega\right)$ the group of diffeomorphic transformations from $\Omega$ onto itself, and $J_{\Phi}$ the Jacobian matrix of the transformation $\Phi$.

In practice, to solve \eqref{eq:reg_incompressible_continuous}, the transformation must be parameterised by a finite number of parameters, and the constraint $\forall x \in \mathcal{M},\, \det\left(J_{\Phi}(x)\right) = 1$ must be reduced to a finite number of equality constraints.
There are two approaches to discretise the constraint:
1) relaxing the constraint into an additive soft constraint~\cite{Aganj2015,DeCraene2012,Heyde2016,Rohlfing2003}.
2) using a specific parameterisation for the transformation~\cite{Bistoquet2008,Mang2018,Mansi2011}.
%
%
Soft constraints introduce hyperparameters that are difficult to tune reliably and cannot guarantee an incompressible transformation.
%
In \cite{Bistoquet2008}, the transformation is parameterised by a divergence-free displacement. Yet, this approach only provides a first order approximation of an incompressible deformation, still requiring a soft constraint.
%
%
Recently \cite{Mang2018,Mansi2011} proposed 
to parameterise the transformation by a divergence-free stationary velocity field (SVF).
The transformation is obtained via the Lie exponential mapping $\exp: v \mapsto \Phi $ that maps any divergence-free SVF $v$ into an incompressible transformation $\Phi$~\cite{Mang2018,Mansi2011}.
This reduces the non-convex constraint in \eqref{eq:reg_incompressible_continuous} to a linear constraint in the continuous domain.
%
However, in~\cite{Mang2018,Mansi2011} the SVF is parameterised as linear B-splines, and the linear constraint is discretised by imposing the constraint only on the points of the deformation grid.
%
As a result, guarantees to obtain a continuous incompressible transformation do not hold anymore.
%
%
To mitigate this issue,~\cite{Mang2018} proposed to work with images of higher resolution with a finer grid for the linear B-spline leading to the need for distributed super-computing.
Moreover,~\cite{Mang2018,Mansi2011} methods are limited to using sum of squared differences (SSD) as image similarity metric, which limits their applicability to images with similar intensity distributions.

In this paper, we propose a constrained optimisation framework for incompressible diffeomorphic registration that allows to use any smooth image similarity and regularisation penalty.
As an efficient means of discrete SVF parameterisation for this problem, we introduce multivariate divergence-conforming B-splines that have recently raised interest in computational physics~\cite{Evans2013a}. 
We demonstrate that their properties can be exploited to impose bounds on the divergence of the SVF over the entire continuous space using sparse linear constraints on its finite parameters.
Our general problem formulation allows us to solve incompressible registration using any state-of-the-art optimiser for constrained non-convex optimisation (e.g. \texttt{IPOPT}~\cite{Wachter2006}).
%
For evaluation, we initially apply our method for multi-modal incompressible registration of synthetically deformed brains.
We then compare our approach against the state-of-the-art results published for the STACOM'11 myocardial tracking challenge dataset~\cite{Tobon-Gomez2013} and achieve similar results while better retaining the incompressibility.
%

\newpage
\section{Method}
\subsubsection{Divergence-conforming B-splines}
%
%
General B-splines are very popular to parameterise deformations and velocity fields
over a continuous spatial domain $\Omega$
with a finite number of parameters
$\phi_i \in \mathds{R}^{3}$:
\begin{equation}
    \label{eq:bspline3d}
     \forall \left(x, y, z\right) \in \Omega,\quad
     v\left(x,y,z\right) = \sum_{i} B^k_{i,X}(x) B^k_{i,Y}(y) B^k_{i,Z}(z)\,\phi_{i},
\end{equation}
where 
$(x_i, y_i, z_i)$ are the knots of a regular grid of spacing $\delta x \times \delta y \times \delta z$ and 
$B^k_{i,U}$ are the
1D B-spline basis functions of order $k$ in the direction $U\in \{X, Y, Z\}$ (see supplemental material for more details).
%
Popular choices of the order are $k=1$ (linear B-splines)~\cite{Mang2018,Mansi2011} and $k=3$ (cubic B-splines)~\cite{Tobon-Gomez2013}.
%
A fundamental property of 1D B-spline basis functions of the variable $u$, on a regular grid 
of spacing $\delta u$ with $n$ knots is that:
\begin{equation}
\label{eq:deriv_bsplines}
    \forall i,
    \quad \dv{B^k_{i,U}}{u} = \frac{B^{k-1}_{i,U} - B^{k-1}_{i+1,U}}{\delta u}
\end{equation}
This implies that the derivative of a 1D B-spline on a regular grid is also a 1D B-spline on the same grid, albeit of lower order.
However, this property does not extend to 3D B-splines using definition \eqref{eq:bspline3d}. 
Especially, the divergence of those 3D B-splines are not B-splines,
because of the mixed order appearing with first partial derivatives.
%
This limitation makes it more difficult to relate properties of the velocity field $v$ to the values of the parameters
$\phi_i$.

To overcome this limitation, we propose to use a 3D divergence-conforming B-splines~\cite{Evans2013a} to parameterise $v$.
The orders of the 3D B-splines basis of each component are chosen so that the divergence of $v$ is, in a continuous sense, exactly a 1D B-spline of order $k$. 
Using the same notations as in \eqref{eq:bspline3d}, and 
using $\phi_i = (\phi^X_i,\phi^Y_i,\phi^Z_i)$,
we have:
\begin{equation}
    \forall (x,y,z) \in \Omega,\quad 
    v(x,y,z) = 
    \left(
    \begin{array}{c}
         \sum_{i} B^{k+1}_{i,X}(x) B^k_{i,Y}(y) B^k_{i,Z}(z)\,\phi^X_{i}  \\
         \sum_{i} B^k_{i,X}(x) B^{k+1}_{i,Y}(y) B^k_{i,Z}(z)\,\phi^Y_{i} \\
         \sum_{i} B^k_{i,X}(x) B^k_{i,Y}(y) B^{k+1}_{i,Z}(z)\,\phi^Z_{i}
    \end{array}
    \right)
\end{equation}
Using \eqref{eq:deriv_bsplines}, we obtain the continuous divergence $\div{v}$ for all $(x,y,z) \in \Omega$:
\begin{equation}
\label{eq:div_divergence-conforming}
    \begin{split}
        \div{v}(x,y,z) =& \sum_{i} B^{k}_{i,X}(x) B^k_{i,Y}(y) B^k_{i,Z}(z)\, \psi_i \\
        \text{s.t.}\,\, \forall i,\quad \psi_i =& \,
    \frac{\phi^X_{i} - \phi^X_{i-1}}{\delta x} +
    \frac{\phi^Y_{i} - \phi^Y_{i-1}}{\delta y} +
    \frac{\phi^Z_{i} - \phi^Z_{i-1}}{\delta z}
    \end{split}
\end{equation}
As a consequence, the following lemma states that $\div{v}$ is uniformly bounded at any point of a continuous subregion $\mathcal{M} \subset \Omega$ provided that a finite number of linear constraints are satisfied by the coefficients 
$(\phi^X_i,\phi^Y_i,\phi^Z_i)$.

\begin{lemma}
\label{lemma:1}
Let $k\geq 2$, and $\mathcal{M}$ be a non-empty subset of $\Omega$. Let $\epsilon \geq 0$
and let $\mathbf{J}_{\mathcal{M}}=\{i
\,\,|\,\, 
\left(
\supp B^{k}_{i,X} \times \supp B^{k}_{i,Y} \times \supp B^{k}_{i,Z}
\right)
\cap \mathcal{M} \neq \emptyset
\}$.
If
\[
    \forall i \in \mathbf{J}_{\mathcal{M}},\quad -\epsilon \leq 
    \frac{\phi^X_{i} - \phi^X_{i-1}}{\delta x} +
    \frac{\phi^Y_{i} - \phi^Y_{i-1}}{\delta y} +
    \frac{\phi^Z_{i} - \phi^Z_{i-1}}{\delta z}
    \leq \epsilon,
\]
then at any point $m \in \mathcal{M}$, it holds that
$\,\abs{\div{v}(m)} \leq \epsilon$.
\end{lemma}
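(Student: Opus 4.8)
The plan is to exploit the explicit formula \eqref{eq:div_divergence-conforming} for $\div{v}$ together with two elementary facts about 1D B-spline basis functions: that they are non-negative, and that they form a partition of unity, i.e.\ $\sum_i B^k_{i,U}(u) = 1$ for every $u$ and every direction $U$. These are standard properties of B-splines on a regular knot grid; I would either cite them or relegate their justification to the supplemental material already referenced in the excerpt.

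First I would fix an arbitrary point $m = (x,y,z) \in \mathcal{M}$ and write, using \eqref{eq:div_divergence-conforming},
\[
\div{v}(m) = \sum_i B^k_{i,X}(x)\, B^k_{i,Y}(y)\, B^k_{i,Z}(z)\, \psi_i,
\]
with $\psi_i$ the linear combination of coefficient differences appearing in the lemma. The key observation is that the term indexed by $i$ vanishes unless $x \in \supp B^k_{i,X}$, $y \in \supp B^k_{i,Y}$ and $z \in \supp B^k_{i,Z}$ simultaneously; but then $m$ lies in $\supp B^{k}_{i,X} \times \supp B^{k}_{i,Y} \times \supp B^{k}_{i,Z}$, which therefore meets $\mathcal{M}$, so $i \in \mathbf{J}_{\mathcal{M}}$. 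Hence the sum over all $i$ coincides with the sum over $i \in \mathbf{J}_{\mathcal{M}}$, and for every such $i$ the hypothesis gives $|\psi_i| \le \epsilon$.

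Next I would bound the sum directly: since each basis value is non-negative, the product weights $w_i := B^k_{i,X}(x) B^k_{i,Y}(y) B^k_{i,Z}(z)$ are non-negative, and by the partition-of-unity property applied in each coordinate we get $\sum_i w_i = \big(\sum_i B^k_{i,X}(x)\big)\big(\sum_i B^k_{i,Y}(y)\big)\big(\sum_i B^k_{i,Z}(z)\big) = 1$ — actually I should be slightly careful here, since the three sums share the same index $i$; the cleanest route is to note that the product of three univariate partitions of unity, when the basis functions for distinct directions depend on distinct variables, still telescopes to $1$, or alternatively simply to use the crude bound $\sum_i w_i \le 1$ which is all that is needed. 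Then
\[
|\div{v}(m)| = \Big| \sum_{i \in \mathbf{J}_{\mathcal{M}}} w_i\, \psi_i \Big| \le \sum_{i \in \mathbf{J}_{\mathcal{M}}} w_i\, |\psi_i| \le \epsilon \sum_{i} w_i \le \epsilon.
\]
Since $m \in \mathcal{M}$ was arbitrary, this proves the claim.

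The only genuinely delicate point is the handling of the partition-of-unity / non-negativity step in three dimensions, i.e.\ making sure the multi-index bookkeeping is correct and that the bound $\sum_i w_i \le 1$ (or $=1$) is legitimate on the whole of $\Omega$ including near the boundary of the grid. Everything else — the support argument linking the nonzero terms to $\mathbf{J}_{\mathcal{M}}$, and the triangle inequality — is routine. The hypothesis $k \ge 2$ is not actually used in this estimate; it is presumably there to guarantee that $v$ is continuously differentiable so that $\div{v}$ is defined pointwise, which I would mention in passing but not dwell on.
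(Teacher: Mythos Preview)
Your proposal is correct and follows essentially the same line as the paper's proof: restrict the sum in \eqref{eq:div_divergence-conforming} to indices in $\mathbf{J}_{\mathcal{M}}$ by the support argument, apply the triangle inequality, bound each $|\psi_i|$ by $\epsilon$, and finish with non-negativity plus partition of unity of the B-spline basis. Your hesitation about the three sums sharing the index $i$ is unfounded once you recall that $i=(i_X,i_Y,i_Z)$ is a multi-index, so $\sum_i w_i$ factorises into three independent univariate sums and equals $1$ on the interior of the grid; the paper simply asserts the final $\leq \epsilon$ without spelling this out, so your version is in fact slightly more explicit.
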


\begin{proof}
The proof follows from \eqref{eq:div_divergence-conforming} and the fact that the value of a B-spline at any point $m$ is bounded by the values of the B-spline coefficients associated to the knots that are close to $m$ (see supplementary material for more details).
\end{proof}

\subsubsection{Optimisation formulation and implementation}
In this section, we formulate the optimisation problem for diffeomorphic registration with the proposed incompressibility constraint on a subregion $\mathcal{M}$.
%
Using Lemma~\ref{lemma:1} and previous notations, our (symmetric) optimisation formulation of \eqref{eq:reg_incompressible_continuous} is:
\begin{equation}
    \label{eq:optim_formulation}
    \begin{split}
        \argmin_{\Theta=\{(\phi^X_i,\phi^Y_i,\phi^Z_i)\}_{i}} \,\, & \mathcal{L}(I_1 \circ \widetilde{\exp}(v(\Theta)), I_2) + \mathcal{L}(I_1, I_2 \circ \widetilde{\exp}(-v(\Theta))) + R(v(\Theta))\\
        \text{s.t. } \quad &
        \forall i \in \mathbf{J}_{\mathcal{M}},\,\,
            \frac{\phi^X_{i} - \phi^X_{i-1}}{\delta x} +
            \frac{\phi^Y_{i} - \phi^Y_{i-1}}{\delta y} +
            \frac{\phi^Z_{i} - \phi^Z_{i-1}}{\delta z}
            = 0\\
    \end{split}
\end{equation}
where 
$\widetilde{\exp}$ is an approximation
of the Lie exponential.
%
Thus we obtain a constrained optimisation formulation that guarantees the SVF to be \emph{exactly} divergence-free over the entire continuous subregion $\mathcal{M}$ and that can be solved with efficient state-of-the-art optimisers.
%
Using state-of-the-art solvers like \texttt{IPOPT}, that uses a primal-dual interior-point filter line-search method, an approximated solution of \eqref{eq:optim_formulation} that satisfies the constraints up to machine precision can be obtained.

%

\subsubsection{Lie exponential approximation and incompressibility}
%
%
As the Lie exponential has to be approximated in practice, having a divergence-free velocity field does not strictly guarantee that the resulting deformation will be incompressible.
We now quantify this approximation in our framework with respect to the time step $\tau$ for the Euler method.
Let $v$ denote a (SVF) solution of \eqref{eq:optim_formulation} that fulfills the divergence-free constraint up to machine precision $\epsilon_{mach}$. For any point $m \in \mathcal{M}$, the Jacobian of the first step of the Euler method, $I + \tau v$, fulfills:
\begin{equation}
    \label{eq:jac_one_step_error}
    \det\left(J_{I + \tau v}(m)\right) = 1 + \tau \div{v}(m) + \mathcal{O}\left(\tau^2\right)
    \quad \text{where} \quad \abs{\div{v}(m)} \leq \epsilon_{mach}.
\end{equation}
Then, by composing $1/\tau$ times, if the point $m$ remains inside $\mathcal{M}$ during the Euler integration, the Jacobian of $\widetilde{\exp}(v)$ satisfies (see supplementary material):
\begin{equation}
    \label{eq:jac_error}
    \det\left(J_{\widetilde{exp}(v)}(m)\right) = 1 + \mathcal{O}\left(\tau + \epsilon_{mach}\right)
\end{equation}
This approximation is independent to the spatial spacing and only depends on the time integration step $\tau$.
This is, to the best of our knowledge, a new state-of-the-art approximation for an incompressible diffeomorphic deformation parameterised by an SVF.
However, it is worth noting that \eqref{eq:jac_error} is only guaranteed
if $m$ remains inside $\mathcal{M}$ during the Euler integration.
Thus, when $\mathcal{M}$ is not equal to the entire spatial domain, this approximation may not hold for large deformations. 


\section{Evaluation on synthetic data for incompressible multi-modal registration}


\begin{table}[tb]
	\centering
	\caption{\textbf{Validation on incompressible multi-modal registration.} 
	RMSE values between predicted and ground-truth transformations are reported in mm. MAE($\abs{\text{Jacobian} - 1}$) stands for the Mean Absolute Error between the Jacobian map of the predicted transformation and a map uniformly equal to $1$ (perfect incompressibility).
	For both metrics, we reported mean value (standard deviation) over 60 registrations.
	}
	\begin{tabularx}{\textwidth}{c *{4}{Y}}
		\toprule
		 \textbf{Error measures} & Affine & Cubic B-splines & Ours\\ 
		\midrule
		RMSE (transformation)      & 2,28 (0,68) & 0.96 (0.44) & \textbf{0.90 (0.45)}\\
		MAE($\abs{\text{Jacobian} - 1}$)  & 0.0062 (0.005) & 0.054 (0.02) & \textbf{0.00079 (0.0004)}\\

		\bottomrule
	\end{tabularx}
	\label{tab:multimod_results}
\end{table}

We start by evaluating our method on synthetic incompressible registration in the context of multi-modal MRI data.

\subsubsection{Data generation method} We used T1, T2 and PD brain images from the IXI dataset\footnote{\url{https://brain-development.org/ixi-dataset/}}.
We generated realistic ground-truth incompressible transformations in two steps.
First, 
we non-linearly registered a pair of T1 images coming from different patients using a classical cubic B-splines SVF $v_0$.
Second, 
we generated a quasi-divergence-free SVF $v$ by projecting 
$v_0$
on the space of quasi-divergence-free SVFs.
This corresponds to solving:
\begin{equation}
    \label{eq:data_generation}
        \argmin_{\Theta=\{\phi_i\}_i} 
        \frac{1}{2}\norm{v(\Theta) - v_0}^2 \quad
        \text{s.t.} 
        \quad \forall i, 
        \quad \div{v}(x_i, y_i, z_i) = 0
\end{equation}
where $v$ and $v_0$ are parameterised with classical cubic B-splines as in \eqref{eq:bspline3d}. We note that any potential bias towards the space of divergence-conforming B-splines is avoided in this comparison. We used \texttt{IPOPT}~\cite{Wachter2006} to solve \eqref{eq:data_generation}.

\subsubsection{Evaluation}
We generated a ground-truth quasi-incompressible SVF for 10 patients using the previous generation procedure, taking as fixed images 10 other patients.
Then for a given subject and for any pair of imaging modalities $M_1, M_2 \in \{T1, T2, PD\}$, we warped the first image $I(M_1)$ using the inverse of the ground-truth transformation. The task consists in estimating $v_{GT}$ by registering $I(M_1) \circ \widetilde{\exp}(-v_{GT})$ to $I(M_2)$.
%
%
We compared the proposed method with divergence-con\-forming B-splines SVF under incompressibility constraint to a classic registration approach based on cubic B-splines SVF similar to the one used to generate the ground-truth.
We also performed an affine registration to illustrate that the ground-truth transformation is not just affine.

\subsubsection{Implementation details}
We used \texttt{NiftyReg}~\cite{Modat2010} for the diffeomorphic registrations using cubic B-splines SVF.
For both \texttt{NiftyReg} and our incompressible implementation we used the same hyperparameters (grid size 5mm, $3$ levels of pyramid) and objective function, consisting of Normalised Mutual Information (NMI) as a similarity measure (weight $0.95$) and a bending energy regularisation term (weight $0.05$). 
The optimiser differs, as we used \texttt{IPOPT} optimiser while a Conjugate Gradient approach is used in \texttt{NiftyReg}.

\subsubsection{Results} \texttt{NiftyReg} and our approach recovered the ground-truth SVF up to a RMSE of the order of the images resolution, as shown in Table.~\ref{tab:multimod_results}. In addition, our method recovered the incompresibility with a higher accuracy.


\begin{figure}[tb!]
    \centering
    \includegraphics[width=\textwidth]{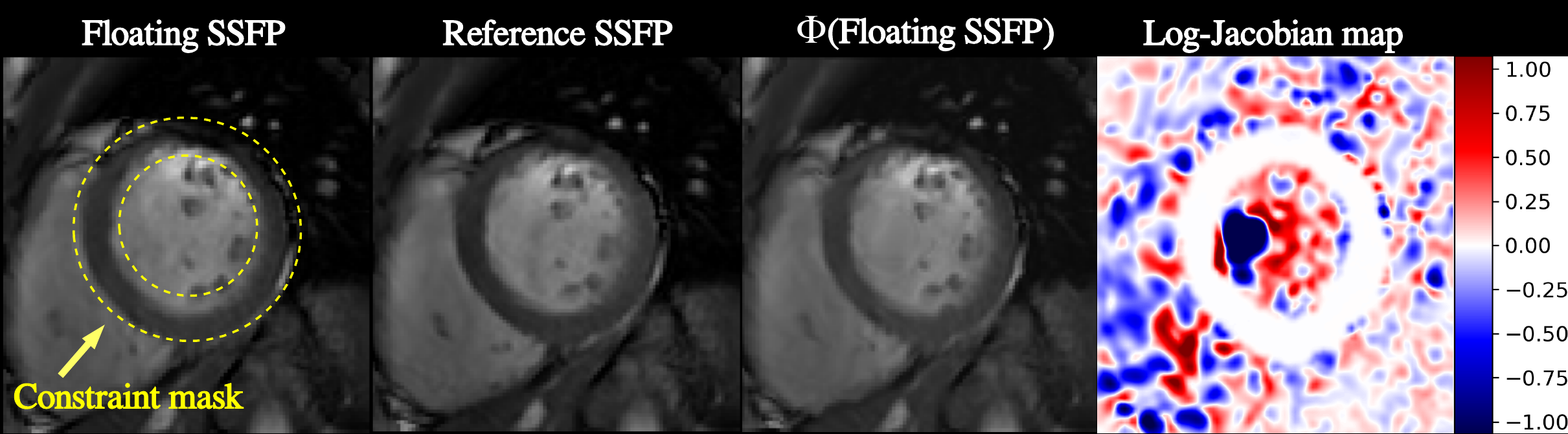}
    \caption{We aim to register SSFP images at different times of the cardiac cycle.
    During the registration, the myocardium should not compress or expand.
    To impose this constraint we provide a mask of it to our algorithm (dashed in yellow).
    The floating image is warped into $\Phi$(Floating SSFP) so that it matches the reference image, while keeping the deformation of the myocardium incompressible, as shown in the log-Jacobian map.
    Red and blue represent expansion and compression respectively.
    }
    \label{fig:logJacobian}
\end{figure}

\section{Evaluation for myocardial tracking}
We evaluate the proposed method on the STACOM'11 myocardial tracking challenge~\cite{Tobon-Gomez2013}.
In particular, this allows a direct comparison of our framework with iLogDemons\cite{Mansi2011} a state-of-the-art incompressible registration method.

\subsubsection{Data}
The STACOM'11 dataset\footnote{\url{http://stacom.cardiacatlas.org/motion-tracking-challenge/}} contains 4D cine Steady State Free Precession (SSFP) of a full cardiac cycle with $30$ time points for $15$ patients.
Those SSFP come with
$12$ manually tracked landmarks.
We used only 13 of the 15 patients available because we found a shift between the images and the landmarks for two of them which we reported to the organisers.
%
The coordinates of the landmarks obtained by competitive methods of the challenge are available alongside the original data. This allows a direct comparison with 
our method.

\subsubsection{Implementation details} 
We used a similar implementation to the one of iLogDemons for this challenge, where we registered the first frame to all subsequent frames for each patient and used a manually delineated mask for the first frame. Yet, we used Local Normalised Cross Correlation as a similarity measure.
We used a bending energy regularisation, and a grid size of $3$mm.

\begin{figure}[bt!]
    \centering
    \begin{subfigure}[t]{0.37\textwidth} 
        \centering
        \includegraphics[width=\linewidth]{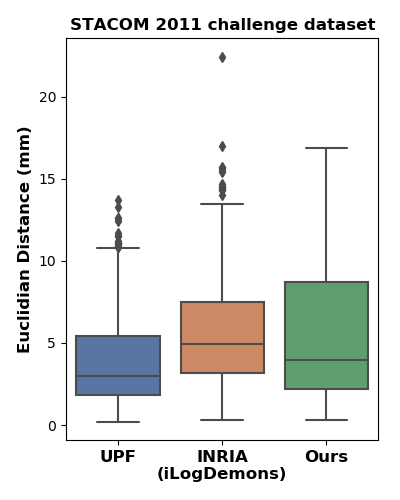}
    \end{subfigure}
    \begin{subfigure}[t]{0.62\textwidth} 
        \centering
        \includegraphics[trim={0 0 0 0.8cm},clip, width=\linewidth]{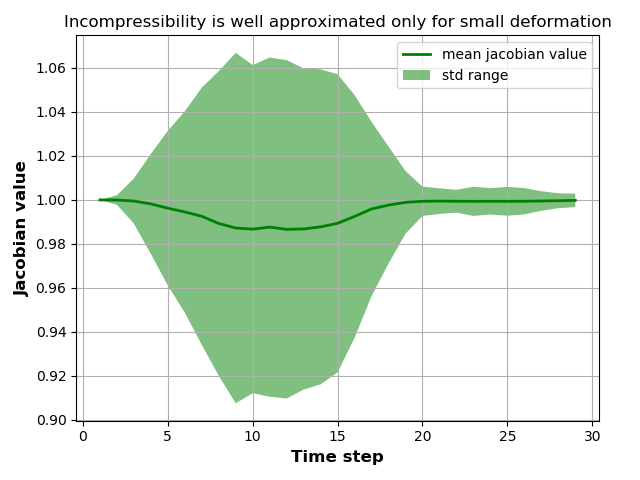} 
    \end{subfigure}
    \caption{(left) Distances to the landmarks after registering SSFP data.
    Our method achieves similar results to iLogDemons~\cite{Mansi2011} for an incompressible registration.
    UPF~\cite{DeCraene2012} achieves the highest accuracy, but it does not guarantee incompressibility.
    (right) Evolution of the Jacobian values distribution in the incompressible region (myocardium) while registering the first frame to all other time frames. The Jacobian values are uniformly close to 1 for small and moderate deformations of the myocardium. For large deformations (frames at the opposite to the cardiac cycle) 
    the mean Jacobian value is close to 1, but the dispersion of the Jacobian value distribution is large.
    This can be attributed to the use of a SVF.
    }
    \label{fig:ssfp_boxplot}
\end{figure}

\subsubsection{Results}
The evaluation of the myocardial tracking is based on the manually annotated landmarks at End Diastole and End Systole.
Using a Wilcoxon signed-rank test, we found that our results, shown in Fig.~\ref{fig:ssfp_boxplot}, are not statistically different from the results of iLogDemons~\cite{Mansi2011} in terms of landmark tracking error.
%
%
The log-Jacobian map of Fig.~\ref{fig:logJacobian} illustrates our approximation results \eqref{eq:jac_error}: for a small deformation, our registration framework can guarantee an incompressible deformation in a subregion with high accuracy.
While Fig.~\ref{fig:ssfp_boxplot} illustrates the degradation of this result for larger deformation, i.e. when registering frames at the opposite of the cardiac cycle.

\newpage

\section{Conclusion and Future Work}

\subsubsection{Limitations}
%
Similar to~\cite{Mang2018,Mansi2011}, the divergence-free constraint is imposed on the stationary velocity field (SVF) in a subregion $\mathcal{M}$, rather than imposing incompressibility on $\mathcal{M}$.
%
A voxel that is originally in $\mathcal{M}$ might exit $\mathcal{M}$ during the integration of the SVF and start being transported by velocity vectors that are not divergence-free (see supplemental material for a mathematical justification).
In this case, the deformation of this voxel is no longer incompressible.
As a result, using an SVF for incompressible registration applies only for deformations that are small enough or when the whole spatial domain is constrained to be divergence-free.
Investigating other diffeomorphic parameterisations is left for future work. 


\subsubsection{Advantages compared to previous methods}
Our method for incompressible registration relies on the parametrisation of the velocity field by a divergence conforming B-splines~\cite{Evans2013a}. 
In contrast to classical B-splines used in~\cite{Mang2018,Mansi2011}, it guarantees that the divergence of the velocity field is still a B-spline. 
This parametrisation along with constrained optimisation methods allows us to impose the velocity field to be divergence-free up to machine precision ($10^{-16}$ in our experiments). This is irrespective of the grid resolution chosen for the velocity field.
%
As a result, the proposed method 
is scalable to 3D images with high resolution.
%

We also proved an error bound for the incompressibility of the deformation for the proposed method in the case an Euler integration of the velocity field is used \eqref{eq:jac_error}.
The study of the error bound for other integrators, that may require additional interpolations (e.g. scaling-and-squaring), is left for future works.
Additionally, previous (quasi-)incompressible registration methods~\cite{Mang2018,Mansi2011} have been limited to using SSD as image similarity metric. We are proposing the first incompressible non-linear registration method that supports any smooth image similarity measure and spatial regularization.

\subsubsection*{Acknowledgments}
This project has received funding from the European Union's Horizon 2020 research and innovation programme under the Marie Sk{\l}odowska-Curie grant agreement TRABIT No 765148;
Wellcome [203148/Z/16/Z; 203145Z/16/Z; WT101957], EPSRC [NS/A000049/1; NS/A000050/1; NS/A000027/1; EP/L016478/1].
TV is supported by a Medtronic / RAEng Research Chair [RCSRF1819\textbackslash7\textbackslash34].

%
%
%
\bibliographystyle{splncs04.bst}
\bibliography{ms.bib}


\newpage
\begin{center}
\textbf{\Large --- Supplemental Document ---\\Incompressible image registration \\using divergence-conforming B-splines}\\[.6cm]
Lucas Fidon,$^{1}$ Michael Ebner,$^{1,2}$ Luis C. Garcia-Peraza-Herrera,$^{1,2}$ Marc Modat,$^{1}$ S\'ebastien Ourselin,$^{1}$ Tom Vercauteren$^{1}$\\[.3cm]
  \small ${}^1$School of Biomedical Engineering \& Imaging Sciences, King’s College London, UK\\
  ${}^2$University College London, UK\\
\end{center}

\setcounter{figure}{0}
\setcounter{table}{0}
\setcounter{page}{1}
\setcounter{section}{0}
\makeatletter

\section{B-splines notations}
To simplify the notations, let us assume that:
\begin{itemize}
    \item $\Omega=[0,1]^3 \subset \mathds{R}^3$ is the continuous spatial domain of the images,
    \item $k \in \mathds{N}$, the order of the B-splines basis, is less than $3$ (higher order values are rarely used in practice)
    \item $\left(\delta x, \delta y, \delta z\right)= (\frac{1}{n}, \frac{1}{n}, \frac{1}{n})$ is a regular spacing with $n > k$,
    \item $\Omega_{grid}^k$ is a regular grid of knots $\{(x_{i_X}, y_{i_Y}, z_{i_Z})=(\frac{i_X}{n},\frac{i_Y}{n},\frac{i_Z}{n})\}_{i_X,i_Y,i_Z=-k}^{n-1}$ for the spacing $\left(\delta x, \delta y, \delta z\right)$ on $[-\frac{k}{n},1-\frac{1}{n}]^3$.
\end{itemize}
%
We use the following notations for the B-splines basis functions of order $k$:
\begin{equation*}
    \begin{split}
    \forall i= (i_X, i_Y, i_Z) \in \{-k, \ldots, n-1\}^3, \,\,& \forall \left(x, y, z\right) \in [0,1]^3,\\
        B^k_{i_X,X}(x) =& B^k\left(\frac{x - x_{i_X}}{\delta x} - \frac{k+1}{2}\right)\\
        B^k_{i_Y,Y}(y) =& B^k\left(\frac{y - y_{i_Y}}{\delta y} - \frac{k+1}{2}\right)\\
        B^k_{i_Z,Z}(z) =& B^k\left(\frac{z - z_{i_Z}}{\delta z} - \frac{k+1}{2}\right)\\
    \end{split}
\end{equation*}
where the $B^k$ for $k \in \{0,1,2,3\}$ are defined by:
\begin{equation*}
    \label{eq:bspline}
    \begin{split}
        \forall t \in \mathds{R}, \quad & B^0\left(t\right) = \mathbf{1}_{[-\frac{1}{2}, \frac{1}{2}]}\left(t\right)\\
        & B^1\left(t\right) = \left(1 - |t|\right)\mathbf{1}_{[-1, 1]}\left(t\right)\\
        & B^2\left(t\right) =  \left(\frac{3}{4} - t^2\right)\mathbf{1}_{[-\frac{1}{2}, \frac{1}{2}]}\left(t\right) + \frac{1}{2}\left(\frac{3}{2} - |t|\right)^2\mathbf{1}_{[\frac{1}{2}, \frac{3}{2}]}\left(|t|\right)\\
        & B^3\left(t\right) =\frac{1}{6}\left(4 - 3t^2\left(2 - |t|\right)\right)\mathbf{1}_{[0, 1]}\left(|t|\right) + \frac{1}{6}\left(2 - |t|\right)^3\mathbf{1}_{[1, 2]}\left(|t|\right)\\
    \end{split}
\end{equation*}
Using those notations, in equation (7) we define a 3D divergence-conforming SVF $v$ of order $k\geq 2$ and parameters $\left\{(\phi^X_{i}, \phi^Y_{i}, \phi^Z_{i})\right\}_{i \in \{-k, \ldots, n-1\}^3} \in \mathds{R}^{3(n+k)^3}$ as, for all $(x,y,z) \in \Omega$,
\begin{equation*}
    \begin{split}
    v(x,y,z) &= 
    \left(
    \begin{array}{c}
         v^X(x,y,z)  \\
         v^Y(x,y,z) \\
         v^Z(x,y,z)
    \end{array}
    \right)\\
    &= 
    \left(
    \begin{array}{c}
         \sum_{i_X=-k}^{n-1} \sum_{i_Y,i_Z=-(k-1)}^{n-1} B^{k}_{i_X,X}(x) B^{k-1}_{i_Y,Y}(y) B^{k-1}_{i_Z,Z}(z)\,\phi^X_{i_X,i_Y,i_Z}  \\
         \sum_{i_Y=-k}^{n-1} \sum_{i_X,i_Z=-(k-1)}^{n-1} B^{k-1}_{i_X,X}(x) B^{k}_{i_Y,Y}(y) B^{k-1}_{i_Z,Z}(z)\,\phi^Y_{i_X,i_Y,i_Z} \\
         \sum_{i_Z=-k}^{n-1} \sum_{i_X,i_Y=-(k-1)}^{n-1} B^{k-1}_{i_X,X}(x) B^{k-1}_{i_Y,Y}(y) B^{k}_{i_Z,Z}(z)\,\phi^Z_{i_X,i_Y,i_Z}
    \end{array}
    \right)
    \end{split}
\end{equation*}

\subsection*{Remark: Support of B-splines basis functions}
The support of a function $f:\mathcal{X} \mapsto \mathds{R}$ is the set of points where $f$ is non-zero, i.e. $\supp\left(f\right) = \left\{x \in \mathcal{X} \,|\, f(x) \neq 0\right\}$.
For any direction $U\in \{X,Y,Z\}$, for $i\in \{-k,\ldots,n-1\}^3$, the support of the function $B^k_{i,U}$ is $]u_i,\, u_i + (k+1)\delta u[$.

\section{Proof of Lemma 1}
In this subsection, we give more details on the proof of Lemma 1. 
We start by showing equation (5).
The functions $B^k$ have the property:
\begin{equation*}
   \forall k \geq 2,\,\, \forall t \in \mathds{R},\quad \dv{B^k}{t} (t) = B^{k-1}\left(t + \frac{1}{2}\right) - B^{k-1}\left(t - \frac{1}{2}\right)
\end{equation*}

\noindent So using the chain rule, we obtain for all $k\geq 2$ and $i = (i_X, i_Y, i_Z) \in \{1, \ldots, N\}^3$,
\begin{equation*}
    \begin{split}
        \dv{B^k_{i,X}}{x} &=\frac{B^{k-1}\left(\frac{x - x_{i_X}}{\delta x} - \frac{k+1}{2} + \frac{1}{2}\right) - B^{k-1}\left(\frac{x - x_{i_X}}{\delta x} - \frac{k+1}{2} - \frac{1}{2}\right)}{\delta x}\\
                    &=\frac{B^{k-1}\left(\frac{x - x_{i_X}}{\delta x} - \frac{k-1+1}{2}\right) - B^{k-1}\left(\frac{x - (x_{i_X}+\delta x)}{\delta x} - \frac{k-1+1}{2}\right)}{\delta x}\\
                    &= \frac{B^{k-1}_{i,X} - B^{k-1}_{(i_X+1,i_Y,i_Z),X}}{\delta x}\\
    \end{split}
\end{equation*}
\noindent Similarly, for the directions $X,Y$ we obtain:
\begin{equation*}
    \begin{split}
        \dv{B^k_{i,Y}}{y} =& \frac{B^{k-1}_{i,Y} - B^{k-1}_{(i_X,i_Y+1,i_Z),Y}}{\delta y}\\
        \dv{B^k_{i,Z}}{z} =& \frac{B^{k-1}_{i,Z} - B^{k-1}_{(i_X, i_Y,i_Z+1),Z}}{\delta z}\\
    \end{split}
\end{equation*}

\noindent Thus, if $k \geq 2$, for all $(x,y,z) \in \Omega=[0,1]^3$:
\begin{equation*}
    \begin{split}
        \pdv{v^X}{x} = \sum_{i_X=-k}^{n-1} \sum_{i_Y,i_Z=-(k-1)}^{n-1} \frac{B^{k-1}_{i_X,X}(x) - B^{k-1}_{i_X+1,X}(x)}{\delta x} B^{k-1}_{i_Y,Y}(y) B^{k-1}_{i_Z,Z}(z)\,\phi^X_{i_X,i_Y,i_Z}\\
    \end{split}
\end{equation*}
Separating the sum into two sums and using the change of index $i_X := i_X + 1$ in the second sum, we obtain:
\begin{equation*}
    \begin{split}
        \pdv{v^X}{x} =& \sum_{i_X=-k}^{n-1} \sum_{i_Y,i_Z=-(k-1)}^{n-1} B^{k-1}_{i_X,X}(x) B^{k-1}_{i_Y,Y}(y) B^{k-1}_{i_Z,Z}(z)\,\frac{\phi^X_{i_X,i_Y,i_Z}}{\delta x}\\
                &- \sum_{i_X=-(k-1)}^{n} \sum_{i_Y,i_Z=-(k-1)}^{n-1} B^{k-1}_{i_X,X}(x) B^{k-1}_{i_Y,Y}(y) B^{k-1}_{i_Z,Z}(z)\,\frac{\phi^X_{i_X-1,i_Y,i_Z}}{\delta x}\\
    \end{split}
\end{equation*}
We have $B^{k-1}_{-k}(x) = B^{k-1}_n(x) = 0$ for all $x \in [0, 1]$ (see remark in \textbf{1.1}), so we can group the two sums and we finally obtain:
\begin{equation*}
    \begin{split}
        \pdv{v^X}{x} =& \sum_{i_X,i_Y,i_Z=-(k-1)}^{n-1} B^{k-1}_{i_X,X}(x) B^{k-1}_{i_Y,Y}(y) B^{k-1}_{i_Z,Z}(z)\,\frac{\phi^X_{i_X,i_Y,i_Z}-\phi^X_{i_X-1,i_Y,i_Z}}{\delta x}\\
    \end{split}
\end{equation*}

\noindent Similarly we obtain, for all $(x,y,z) \in \Omega=[0,1]^3$:
\begin{equation*}
    \begin{split}
        \pdv{v^Y}{y} =& \sum_{i_X,i_Y,i_Z=-(k-1)}^{n-1} B^{k-1}_{i_X,X}(x) B^{k-1}_{i_Y,Y}(y) B^{k-1}_{i_Z,Z}(z)\,\frac{\phi^Y_{i_X,i_Y,i_Z}-\phi^Y_{i_X,i_Y-1,i_Z}}{\delta y}\\
        \pdv{v^Z}{z} =& \sum_{i_X,i_Y,i_Z=-(k-1)}^{n-1} B^{k-1}_{i_X,X}(x) B^{k-1}_{i_Y,Y}(y) B^{k-1}_{i_Z,Z}(z)\,\frac{\phi^Z_{i_X,i_Y,i_Z}-\phi^Z_{i_X,i_Y,i_Z-1}}{\delta z}\\
    \end{split}
\end{equation*}

\noindent As a result, for all $(x,y,z) \in \Omega=[0,1]^3$, the divergence of $v$ at $(x,y,z)$ is given by:
\begin{equation*}
    \begin{split}
        \div{v}(x,y,z) =& \sum_{i_X,i_Y,i_Z=-(k-1)}^{n-1} B^{k-1}_{i_X,X}(x) B^{k-1}_{i_Y,Y}(y) B^{k-1}_{i_Z,Z}(z)\,\psi_{i_X,i_Y,i_Z}\\
        \text{s.t.}\quad \forall i_X,i_Y,i_Z,\quad \psi_{i_X,i_Y,i_Z} =& \quad
    \frac{\phi^X_{i_X,i_Y,i_Z} - \phi^X_{i_X-1,i_Y,i_Z}}{\delta x} \\
    &+ \frac{\phi^Y_{i_X,i_Y,i_Z} - \phi^Y_{i_X,i_Y-1,i_Z}}{\delta y}\\
    &+ \frac{\phi^Z_{i_X,i_Y,i_Z} - \phi^Z_{i_X,i_Y,i_Z-1}}{\delta z}
    \end{split}
\end{equation*}

\noindent We are now ready to prove Lemma 1.
Let us assume that $k\geq 2$, and $\mathcal{M}$ is a non-empty subset of $\Omega$. Let $\epsilon \geq 0$
and 
\begin{equation*}
\begin{split}
    \mathbf{J}_{\mathcal{M}}=\{ & (i_X,i_Y,i_Z) \in\{-(k-1),\ldots,n-1\}^3
    \,\,|\,\, \\
    & \left(
    \supp B^{k-1}_{i_X,X} \times \supp B^{k-1}_{i_Y,Y} \times \supp B^{k-1}_{i_Z,Z}
    \right)
    \cap \mathcal{M} \neq \emptyset
    \}
\end{split}
\end{equation*}
%
%
\noindent $\mathbf{J}_{\mathcal{M}}$ contains all the indices $(i_X, i_Y, i_Z)$ so that the support of the function $(x,y,z) \mapsto B^{k-1}_{i_X,X}(x) B^{k-1}_{i_Y,Y}(y) B^{k-1}_{i_Z,Z}(z)$ is non-zero for at least one point of $\mathcal{M}$. \medskip

\noindent As a result, if for all $(i_X,i_Y,i_Z) \in \mathbf{J}_{\mathcal{M}},$
\[
    -\epsilon \leq 
    \frac{\phi^X_{i_X,i_Y,i_Z} - \phi^X_{i_X-1,i_Y,i_Z}}{\delta x} +
    \frac{\phi^Y_{i_X,i_Y,i_Z} - \phi^Y_{i_X,i_Y-1,i_Z}}{\delta y} +
    \frac{\phi^Z_{i_X,i_Y,i_Z} - \phi^Z_{i_X,i_Y,i_Z-1}}{\delta z}
    \leq \epsilon,
\]
\noindent Then, for all $(x,y,z) \in \mathcal{M}$,
\begin{equation*}
    \begin{split}
        \abs{\div{v}(x,y,z)} &= \abs{\sum_{i_X,i_Y,i_Z=-(k-1)}^{n-1} B^{k-1}_{i_X,X}(x) B^{k-1}_{i_Y,Y}(y) B^{k-1}_{i_Z,Z}(z)\,\psi_{i_X,i_Y,i_Z}}\\
        &= \abs{\sum_{(i_X,i_Y,i_Z) \in \mathbf{J}_{\mathcal{M}}} B^{k-1}_{i_X,X}(x) B^{k-1}_{i_Y,Y}(y) B^{k-1}_{i_Z,Z}(z)\,\psi_{i_X,i_Y,i_Z}}\\
        &\leq \sum_{(i_X,i_Y,i_Z) \in \mathbf{J}_{\mathcal{M}}} B^{k-1}_{i_X,X}(x) B^{k-1}_{i_Y,Y}(y) B^{k-1}_{i_Z,Z}(z)\,\abs{\psi_{i_X,i_Y,i_Z}}\\
        &\leq \sum_{(i_X,i_Y,i_Z) \in \mathbf{J}_{\mathcal{M}}} B^{k-1}_{i_X,X}(x) B^{k-1}_{i_Y,Y}(y) B^{k-1}_{i_Z,Z}(z)\,\epsilon\\
        &\leq \,\epsilon
    \end{split}
\end{equation*}
Which concludes the proof of Lemma 1.

\section{Proof for Lie exponential approximation and incompressibility}

\noindent In this subsection, we give a detailed proof for the error of incompressibility with an Euler approximation of the Lie exponential (8).
\noindent Let $K \in \mathds{N}$, let $\tau = \frac{1}{2^K}$ be the time step in the Euler integration.
The Euler approximation of the Lie exponential for the time step $\tau$ is given by:
\begin{equation*}
    \begin{split}
        \widetilde{\exp} &= \left(I + \frac{1}{2^K}v\right) \circ \ldots \circ \left(I + \frac{1}{2^K}v\right)\\
            &= \left(I + \frac{1}{2^K}v\right)^{2^K}
    \end{split}
\end{equation*}
where $I$ is the identity mapping.

\noindent Let $\mathcal{M}$ be a non-empty subregion of the spatial domain $\Omega$ and $m=(m_X,m_Y,m_Z) \in \mathcal{M}$.
The Jacobian of the first step of the Euler integration $\left(I + \frac{1}{2^K}v\right)$ at $m$ is given by:
\begin{equation*}
    \begin{split}
        \det\left(J_{I + \frac{1}{2^K} v}(m)\right) &= 
            \left|
            \begin{array}{ccc}
                \frac{1}{2^K}\pdv{v^X}{x}(m) + 1 & \frac{1}{2^K}\pdv{v^X}{y}(m)     & \frac{1}{2^K}\pdv{v^X}{z}(m)\\
                \frac{1}{2^K}\pdv{v^Y}{x}(m)     & \frac{1}{2^K}\pdv{v^Y}{y}(m) + 1 & \frac{1}{2^K}\pdv{v^Y}{z}(m)\\
                \frac{1}{2^K}\pdv{v^Z}{x}(m)     & \frac{1}{2^K}\pdv{v^Z}{y}(m)     & \frac{1}{2^K}\pdv{v^Z}{z}(m) + 1\\
            \end{array}
            \right|\\
            &=1 + \frac{1}{2^K} \div{v}(m) + \mathcal{O}\left(\left(\frac{1}{2^K}\right)^2\right)
    \end{split}
\end{equation*}
Let us note:
\begin{equation*}
\left\{
    \begin{split}
        m_0 &= m\\
    \forall k \in \{1, \ldots, 2^K\}, \quad m_k &= \left(I + \frac{1}{2^K}v\right)(m_{k-1})\\
        &= \left(I + \frac{1}{2^K}v\right)^{k}(m)
    \end{split}
\right.
\end{equation*}
Using the chain rule and the fact that the determinant of the composition of two matrices is equal to the product of their determinant, we obtain:
\begin{equation*}
    \begin{split}
        \det\left(J_{\widetilde{exp}(v)}(m)\right) &= \det\left(J_{\left(I + \frac{1}{2^K} v\right)^{2^K}}(m)\right)\\
        &= \det\left(J_{\left(I + \frac{1}{2^K} v\right) \circ \left(I + \frac{1}{2^K} v\right)^{2^K - 1}}(m)\right)\\
        &= \det\left(J_{\left(I + \frac{1}{2^K} v\right)}(m_{2^K - 1})\right)\det\left(J_{\left(I + \frac{1}{2^K} v\right)^{2^K - 1}}(m)\right)\\
        &= \prod_{k=0}^{2^K-1} \det\left(J_{I + \frac{1}{2^K} v}(m_k)\right)\\
        &= \prod_{k=0}^{2^K-1} \left(1 + \frac{1}{2^K} \div{v}(m_k) + \mathcal{O}\left(\left(\frac{1}{2^K}\right)^2\right)\right)\\
        &= 1 + \frac{1}{2^K}\left(\sum_{k=0}^{2^K-1} \div{v}(m_k)\right) + \mathcal{O}\left(\frac{1}{2^K}\right)
    \end{split}
\end{equation*}

\noindent Let us now assume that $v$ satisfies the condition of Lemma 1 for $\epsilon$ close to $0$, up to machine precision (typically $\epsilon=\epsilon_{mach}=10^{-16}$).
If in addition, $\forall k \in \{0, \ldots, 2^K - 1\}, m_k \in \mathcal{M}$.
Then, $\forall k \in \{0, \ldots, 2^K - 1\}, \abs{\div{v}(m_k)} \leq \epsilon_{mach}$, and
\begin{equation}
    \label{eq:ineq_error}
    \abs{\frac{1}{2^K}\left(\sum_{k=0}^{2^K-1} \div{v}(m_k)\right)} \leq \frac{1}{2^K}\left(\sum_{k=0}^{2^K-1} \abs{\div{v}(m_k)}\right) \leq \epsilon_{mach}
\end{equation}

\noindent As a result, we obtain the approximation given in (8):
\begin{equation*}
    \det\left(J_{\widetilde{exp}(v)}(m)\right) = 1 + \mathcal{O}\left(\tau + \epsilon_{mach}\right)
\end{equation*}

\subsection*{Remark: divergence-free SVF does not guarantee incompressibility of the transformation when $\mathcal{M}$ is local and the deformation is large}
In the case at least one of the $m_k$ is outside of $\mathcal{M}$, inequalities \eqref{eq:ineq_error} do not hold anymore.
Although the divergence of the SVF $v$ is uniformly close to $0$ on $\mathcal{M}$, some point can be deformed in a compressible manner.
This happens when the deformation is large and points that were initially inside $\mathcal{M}$ end up outside of $\mathcal{M}$ during the Euler integration.
In practice, deviations from an incompressible deformation in $\mathcal{M}$ can be observe for accurate divergence-free SVFs in $\mathcal{M}$, as illustrated in Fig. \ref{fig:ssfp_boxplot}.

Fortunately, the definition of $J_{\mathcal{M}}$ implies the presence of margins around the incompressible region $\mathcal{M}$.
As a consequence, inequalities \eqref{eq:ineq_error} are verified in practice for small and moderate deformations in $\mathcal{M}$, as illustrated in Fig. \ref{fig:logJacobian}.
One can verify that thoses margins are linear in the order $k$ of the B-splines basis.

It is worth noting that this limitation is due to the use of SVFs, also used in previous works.
In addition, when $\mathcal{M}$ is equal to the entire image domain $\Omega$, inequalities \eqref{eq:ineq_error} are always satisfied for a divergence-free SVF obtained by our method.
%
%
%
%
\end{document}